\newtheorem{theorem}{Theorem}[section]
\newtheorem{remark}[theorem]{Remark}
\numberwithin{equation}{section}
\begin{document}
\title{  Preconditioned Gradient Descent Algorithm for Inverse Filtering on  Spatially Distributed Networks}
\author{Cheng Cheng, Nazar Emirov, and Qiyu Sun %(Member, IEEE)
\thanks{Cheng is with the Department of Mathematics, Duke University,   Durham, NC 27708; Nazar and Sun is with the  Department of Mathematics, University of Central Florida, Orlando, Florida 32816.
Emails:  cheng87@math.duke.edu;  nazaremirov@knights.ucf.edu;  qiyu.sun@ucf.edu. This work is partially supported by % the National Natural Science 
Simons Math+X Investigators Award (400837) and the  National Science Foundation (DMS-1816313).
}
}

\maketitle

\begin{abstract}
Graph filters and their inverses have been widely used in % signal
	denoising, smoothing, sampling, interpolating and learning.  %Distributed
 Implementation of  an inverse filtering procedure
   on spatially distributed networks  (SDNs) %, such as wireless sensor networks and drone fleet,
   is a remarkable challenge, as
  each agent on an SDN is  equipped  with a data processing subsystem with limited capacity and
a communication subsystem with confined range due to engineering limitations. %,  privacy concerns or design considerations.
In this letter, we introduce a preconditioned gradient descent algorithm to implement the inverse filtering procedure associated with a  graph filter having small geodesic-width. The proposed algorithm converges exponentially, and
it  can be implemented at vertex level
% with confined communication range,
  and   applied  to  time-varying inverse filtering  on  SDNs. %spatially distributed networks.
\end{abstract}

\vskip-1mm  {\bf Keywords:} {Graph signal processing,  Inverse filtering,  Spatially distributed network, % Distributed algorithms at vertex level,
 Gradient descent method, Preconditioning, Quasi-Newton method}

\vskip-1.8mm

\section{Introduction}

Spatially distributed networks (SDNs) have been widely used in (wireless) sensor networks,
drone fleets, smart grids
 and many real world applications %\cite{Yick08, shuman13, aliaksei14, Hebner17, Ortega18, Cheng17}
\cite{Yick08}--\cite{Cheng17}.
 An  SDN has a large amount of agents and each agent
 equipped  with  a data processing subsystem having
limited  data storage and computation power
%data processing  capacity %data storage and computation power
    and a communication subsystem  for  data exchanging  to its ``neighboring" agents
  within  communication %distance
   range. % due to engineering limitations. %,  privacy concerns, or other physical barriers.
The topology of an SDN can be described by a connected,  undirected  and unweighted finite graph
${\mathcal G}:=(V, E)$  with a vertex  in  $V$ %=\{1, \cdots, N\}$
 representing an agent and an edge in $E$ between vertices  indicating that
the corresponding agents are within some range in the spatial space.
 In this letter,
 we consider SDNs equipped with a  communication subsystem  at each agent   to directly communicate between two agents
 if the  geodesic distance  %$\rho(i,j)$
 between their corresponding vertices $i,j\in V$
 is at most $L$,
 \vspace{-.5em}
 \begin{equation}\label{communicationreange.def}
 \rho(i,j)\le L,
 \vspace{-.5em}
 \end{equation}
 where  the geodesic distance
 $\rho(i,j)$ is  the  number of edges in a shortest path connecting  $i, j\in V$,  and we call  the  minimal integer $L\ge 1$ in \eqref{communicationreange.def}
 as  the {\em communication range} of the SDN.
    Therefore
 the implementation of data processing on our SDNs is a distributed %processing
 task and
 it should be designed at agent/vertex level with confined communication range.  In this letter, we consider the implementation of graph filtering and inverse filtering on SDNs, which are required to be fulfilled at agent level with  communication range  no more than $L$.

 A  signal  on a graph ${\mathcal G}=(V, E)$ is a vector ${\bf x}=(x(i))_{i\in V}$ indexed by the vertex set,
  and a  graph filter ${\bf H}$ maps a  graph signal  ${\bf x}$ %=(x(i))_{i\in V}$
 linearly
   to another graph  signal  ${\bf y}={\bf H}{\bf x}$,   %(y(i))_{i\in V}$,
 which
is usually represented by a matrix ${\bf H} = (H(i,j))_{i,j\in V}$ indexed by vertices in $V$.
 Graph filtering ${\bf x} \mapsto {\bf Hx}$ and its inverse filtering
 ${\bf y}\mapsto {\bf H}^{-1} {\bf y}$ play important roles in graph signal processing
   and they have been used in
    smoothing, sampling, interpolating and many real-world applications
   \cite{shuman13}, \cite{aliaksei13}--\cite{jiang2021}.
 A graph
 filter ${\bf H} = (H(i,j))_{i,j\in V}$ is said to have {\em geodesic-width} $\omega({\bf H})$   if
\vspace{-.5em}\begin{equation}\label{bandwidth.def}
H(i,j)=0\ \ {\rm  for\ all} \ \ i, j\in V \ \ {\rm with} \ \  \rho(i,j)>\omega({\bf H}) \vspace{-.5em}\end{equation}
%	where $\rho(i,j)$ is the
%	geodesic distance between vertices $i,j\in V$,  %{\color{red} i.e., number of edges in a shortest path},
%and  the minimal nonnegative integer  $s$ in \eqref{bandwidth.def} is denoted by $\omega({\bf H})$
\cite{Cheng17,  Tay19, jiang19}.  %and we use simplified notation $\omega$ to denote
%{\em geodesic-width} $\omega({\bf H})$ if there is no confusion.
For a %graph
 filter ${\bf H} = (H(i,j))_{i,j\in V}$ with geodesic-width
 $\omega({\bf H})$, %an advantage  is that
 the  corresponding
filtering process
\vspace{-.5em}
\begin{equation}\label{filterprocedure.def}
(x(i))_{i\in V}=:{\bf x} \longmapsto {\bf Hx} ={\bf y}:= (y(i))_{i\in V} \vspace{-.5em}\end{equation}
 can be implemented
at vertex level, and
 the output  at a vertex $i\in V$ is a ``weighted" sum of  the input  in its $\omega({\bf H})$-neighborhood,
\vspace{-.5em}\begin{equation}\label{inputoutput.localimplementation} y(i)=\sum_{\rho(j, i)\le \omega({\bf H})} H(i,j)x(j). \vspace{-.5em} \end{equation}
  %where ${\bf x}=(x(i))_{i\in V}$  and ${\bf H}{\bf x}=(y(i))_{i\in V}$ are the input and output signals respectively
For  SDNs with communication range $L\ge \omega({\bf H})$, the above implementation at vertex level
 provides an essential tool
	for the filtering procedure \eqref{filterprocedure.def}, in which each agent  $i\in V$  has equipped with subsystems to store
 $H(i,j)$ and $x(j)$ with $\rho(j, i)\le \omega({\bf H})$,
to compute addition and multiplication in \eqref{inputoutput.localimplementation}, and
to exchange   data  to its neighboring agents  $j\in V$ satisfying $\rho(j,i)\le \omega({\bf H})$.

For an invertible filter ${\bf H}$, the  implementation of the inverse filtering procedure
\vspace{-.8em}
\begin{equation}\label{inversefilterprocedure.def}
%(y(i))_{i\in V}=:
{\bf y} \longmapsto {\bf H}^{-1}{\bf y}=:{\bf x} \vspace{-.3em}\end{equation}
cannot be directly applied for our SDNs, % may suffer from high computational burden,
since the  inverse
 %graph
  filter  ${\bf H}^{-1}$ may have geodesic-width {\em larger} than the communication range $L$. % even if the original filter  ${\bf H}$ has limited geodesic-width.
  For the consideration of implementing inverse filtering on an SDN with communication range $L\ge 1$, we construct a diagonal preconditioning matrix  ${\bf P}_{\bf H}$  in \eqref{da.def} at vertex level,
and propose
the  preconditioned gradient descent algorithm (PGDA) \eqref{gradientdescent.al2.default}
to implement inverse filtering on the SDN, see Algorithms \ref{preconditioningmatrix.algorithm} and \ref{distributed_ICPA.algorithm}.
%  In this letter, for a graph filter ${\bf H}$ with geodesic-width $\omega({\bf H})\le L$,

 A conventional approach to implement the  inverse filtering  procedure
 \eqref{inversefilterprocedure.def}
 % ${\bf y}\mapsto {\bf x}={\bf H}^{-1} {\bf y}$
is
via the iterative quasi-Newton method
% the authors proposed a more general approach, % for inverse filtering,
% \begin{equation}\label{Approximationalgorithm}
% \left\{\begin{array}{l} {\bf e}^{(m)}= {\bf H}{\bf x}^{(m-1)}-{\bf y}\\
%{\bf x}^{(m)}={\bf x}^{(m-1)}-{\bf G}{\bf e}^{(m)}, \ m\ge 1,
%\end{array}\right.
%\vspace{-.3em}\end{equation}
\vspace{-.3em} \begin{equation}\label{Approximationalgorithm}
 {\bf e}^{(m)}= {\bf H}{\bf x}^{(m-1)}-{\bf y}\ {\rm and} \
{\bf x}^{(m)}={\bf x}^{(m-1)}-{\bf G}{\bf e}^{(m)}, \ m\ge 1,
\vspace{-.3em}\end{equation}
with arbitrary initial ${\bf x}^{(0)}$, where the graph filter ${\bf G}$ is an approximation to the inverse ${\bf H}^{-1}$.
A challenge in the quasi-Newton method %to apply the quasi-Newton method \eqref{Approximationalgorithm}
% for the implementation of the inverse filtering procedure \eqref{inversefilterprocedure.def}
% on our SDNs
  is how to select the approximation filter ${\mathbf G}$  appropriately.
For   the  % most
 widely used polynomial graph filters
 $ {\bf H}=h({\bf S})=\sum_{k=0}^{K}h_{k}{\bf S}^k $
% \vspace{-.5em}
% \begin{equation}\label{polynomialfilter.def}
% {\bf H}=h({\bf S})=\sum_{k=0}^{K}h_{k}{\bf S}^k \vspace{-.5em}\end{equation}
 of a graph shift ${\bf S}$ where $h(t)=\sum_{k=0}^{K}h_{k}t^k$ \cite{jiang19}--\cite{Emirov20},
 %For a polynomial graph filter  ${\bf H}=h({\bf S})$ of a graph shift ${\bf S}$, % in \eqref{polynomialfilter.def},
 several methods  %, such as Chebyshev approximation, optimal approximation and autoregressive moving average,
 have been proposed
to construct
  polynomial approximation filters
 ${\bf G}$
 \cite{%sihengTV15,
 Leus17, isufi19, Shuman18, Emirov19,   Emirov20}.
% , even though the original polynomial filter ${\bf H}$ and the polynomial approximation filter ${\bf G}$
% may not satisfy the geodesic-width requirement \eqref{width.req},
  However, for the convergence of the  corresponding quasi-Newton method, %all of
   %the above %proposed
    %methods require
    some prior knowledge is required for  the polynomial $h$ and
      the graph shift ${\bf S}$,
  such as the whole spectrum  %$\sigma({\bf S})$
   of the  %graph
   shift
    ${\bf S}$ in the optimal polynomial approximation method
  \cite{Emirov20},
  the interval   containing the spectrum of  the   shift
   ${\bf S}$ %$\sigma({\bf S})$
   in the Chebyshev approximation method
  \cite{Shuman18, Emirov19, Emirov20}, and
  the spectral radius  %$r({\bf S})$
  of the  shift
  ${\bf S}$
   and  the zero set of the polynomial $h$
  in the  autoregressive moving average
 filtering  algorithm
 \cite{Leus17, isufi19}.
  For a  non-polynomial graph filter  ${\bf H}$, the approximation filter %${\bf G}$
  in the
  gradient descent method  %\eqref{Approximationalgorithm}
  is of the form
$ {\bf G}=\beta {\bf H}^T$ %\end{equation}
  with  selection  of
 the  optimal step length $\beta$ depending on maximal and minimal  singular values of the filter  ${\bf H}$  \cite{ sihengTV15, Shi15}, and
 the approximation filter %${\bf G}$
 in the iterative matrix inverse approximation  algorithm (IMIA)   could be selected
 %to  satisfy \eqref{spectral.req}  and \eqref {width.req}
 under  a strong assumption on   ${\bf H}$ \cite[Theorem 3.2]{Tay19}.
  The proposed PGDA  \eqref{gradientdescent.al2.default}
is the quasi-Newton method \eqref{Approximationalgorithm} with ${\bf P}_{\bf H}^{-2} {\bf H}^T$ being selected as
 the approximation filter ${\bf G}$, see \eqref{hht.singularvalue}.  Comparing with the quasi-Newton methods in   \cite{Tay19, sihengTV15,
 Leus17,  isufi19, Shuman18, Emirov19,   Emirov20, Shi15}, %where some prior knowledge on the spectral set of the filter ${\bf H}$ is required,
one significance of the proposed PGDA is that
the sequence
${\bf x}^{(m)}, m\ge 0$,  in \eqref{gradientdescent.al2.default}
converges exponentially to the output  ${\bf x}$ of the inverse filtering procedure \eqref{inversefilterprocedure.def} whenever the filter ${\bf H}$ is invertible,
%(and the inverse filtering procedure \eqref{inversefilterprocedure.def} is   well-defined),
see Theorems \ref{exponentialconvergence.tm} and \ref{symmetricexponentialconvergence.tm}.
% that it can be used to  implement the inverse filtering procedure \eqref{inversefilterprocedure.def}

 For a {\em time-varying} filter %${\bf H}_t=(H_t(j,k))_{j, k\in V}$,
 the quasi-Newton method \eqref{Approximationalgorithm} to implement their inverse filtering on  SDNs should be {\em self-adaptive} as each agent  %$i\in V$
   does not have the whole time-varying filter   %${\bf H}_t$
   and it only receives
  the submatrix %$(H_t(j,k))_{\rho(j, i)\le L, \rho(k,i)\le L}$
   of the filter  %${\bf H}_t$
   within the communication range \cite{Cheng17}.
  The IMIA  algorithm  is  self-adaptive  \cite[Eq. (3.4)]{Tay19},
 and   the
   gradient descent method  \cite{ sihengTV15, Shi15} is not self-adaptive in general  except that the step
length  $\beta$ can be chosen to be {\em time-independent}.
 The second significance  of the proposed PGDA is its {\em self-adaptivity} and  {\em  compatibility} % applicability}
 to implement the time-varying inverse filtering procedure  on our SDNs, %  with  communication range $L$,
 as  the  preconditioner ${\bf P}_{\bf H}$ (and hence the approximation filter ${\bf P}_{\bf H}^{-2} {\bf H}^T$ in the PGDA)
 is constructed at  vertex level with confined communication range,
  see  Algorithm \ref{preconditioningmatrix.algorithm}.
%Our numerical simulations indicate that the proposed PGDA  %\cc{SPGDA have not been mentioned before.}
%  has comparable performance on exponential convergence  as the
%  gradient descent method in \cite{ sihengTV15, Emirov20, Shi15} does.
%$ {\bf G}=\beta {\bf H}^T$ %\end{equation}
%  with  selection  of
% the  step length $\beta$ depending on maximal  singular values of the filter  ${\bf H}$
% especially when the inverse filtering procedure is  ill-conditioned.}

\vspace{-.05in}
\section{Preconditioned gradient descent algorithm for inverse filtering}\label{sec:pre}
\vspace{-.02in}

Let ${\mathcal G}:=(V, E)$ be a   connected, undirected and unweighted graph and ${\bf H}=(H(i,j))_{i,j\in V}$ be a  filter on the graph  ${\mathcal G}$   with  geodesic-width  $\omega({\bf H})$.
   In this section, we induce a diagonal matrix ${\bf P}_{\bf H}$
 with  diagonal elements  $P_{\bf H}(i, i), i\in V$, given  by
 \vspace{-.3em}\begin{eqnarray} \label{da.def}
&\hskip-0.08in  { P}_{\bf H}(i, i)  %& :=     \max \Big\{ \max_{k\in B(i, \omega({\bf H}))}  \sum_{j\in  V} |H(j,k)|,\nonumber\\
%& \hskip.08in&\qquad \qquad \quad
%\max_{k\in B(i, \omega({\bf H}))}\sum_{j\in V} |H(k, j)|\Big\}
% \nonumber\\
%& \hskip-0.08in
  := & \hskip-0.08in    \max_{k\in B(i, \omega({\bf H}))} \Big\{
 \max \Big( \sum_{j\in B(k, \omega({\bf H}))} |H(j, k)|, \nonumber\\
& \hskip.08in&\qquad\qquad\quad
\sum_{j\in B(k, \omega({\bf H}))} |H(k, j)|\Big)\Big\},
 \vspace{-.3em}\end{eqnarray}
 where we denote
 the set of all $s$-hop neighbors of a vertex $i\in V$  by $B(i,  s)=\{j\in V, \ \rho(j,i)\le s\}, \ s\ge 0$.
 %  In this section,  we
The  above diagonal matrix  ${\bf P}_{\bf H}$ % in \eqref{da.def} 
 can be evaluated at  vertex level, and   constructed on SDNs with communication range $L\ge \omega({\bf H})$,
  see
 Algorithm \ref{preconditioningmatrix.algorithm}.
% the geodesic distance between vertices $i, j\in V$
%by $\rho(i, j)$,
%and
%\vspace{-.03in}
 \begin{algorithm}[t]
\caption{Realization of the preconditioner %diagonal matrix
 ${\bf P}_{\bf H}$ at a vertex $i\in V$. }
\label{preconditioningmatrix.algorithm}
\begin{algorithmic}  %[1]

\STATE {\bf Inputs}:  Geodesic width $\omega({\bf H})$ of the filter ${\bf H}$ and
nonzero entries
 $H(i,j)$ and $H(j, i)$ for $ j \in B(i,\omega({\bf H}))$ in the $i$-th row and column
 of the filter ${\bf H}$.

%\STATE {\bf Operation}: Evaluate $m_k=\mu(B(k, r))$, compute ${\bf F}_k= {\bf H}_{0,k}^T{\bf H}_{0,k}+ {\bf H}_{1,k}^T{\bf H}_{1,k}$,
% find its inverse  $({\bf F}_k)^{-1}$, and then compute $ {\bf G}^L_{l; k}:=({\bf F}_k)^{-1} {\bf H}_{l,k}^T, l=0, 1$.

%  , and a local approximation
%${\bf G}_k=(\tilde f_k(i,j))_{i,j\in B(k, 2r)}$  to the matrix ${\bf H}$
% and compute ${\bf F}_k= {\bf H}_{0,k}^T{\bf H}_{0,k}+ {\bf H}_{1,k}^T{\bf H}_{1,k}$ and
%$({\bf F}_k)^{-1}=(\tilde f_k(i,j))_{i,j\in B(k, 2r)}$

\STATE{\bf 1)} Calculate\\
 $d(i)=\max\Big\{\sum\limits_{j\in B(i,\omega({\bf H}))}|H(i,j)|,\sum\limits_{j\in B(i,\omega({\bf H}))}|H(j,i)|\Big\}$.
\STATE{\bf 2)} Send $d(i)$ to all neighbors $k\in B(i,\omega({\bf H}))\backslash \{i\}$ and receive $d(k)$ from neighbors $k\in B(i,\omega({\bf H}))\backslash \{i\}$.
\STATE{\bf 3)} Calculate $P_{\bf H}(i, i)=\max\limits_{k\in B(i,\omega({\bf H}))} d(k) $.
\STATE {\bf Output}: $P_{\bf H}(i, i)$.
\end{algorithmic}\vspace{-.03in}
\end{algorithm}

 % associated with the inverse filtering ${\bf y}\mapsto {\bf x}={\bf H}^{-1}{\bf y}$.
For symmetric  matrices ${\bf A}$ and ${\bf B}$, we use ${\bf B}\preceq  {\bf A}$ and ${\bf B}\prec  {\bf A}$  to denote  the positive semidefiniteness and positive definiteness of their difference ${\bf A}-{\bf B}$ respectively.
A crucial observation about the diagonal matrix % preconditioner
 $\bf P_H$ is as follows.
 %dominates the operator norm of the graph filter $\bf H$.

 \begin{theorem}\label{hht.tm}
 {\rm Let ${\bf H}=(H(i,j))_{i,j\in V}$
  be a graph filter with  geodesic-width $\omega({\bf H})$
 and
 ${\bf P}_{\bf H}$ be as in \eqref{da.def}. Then
 \vspace{-.3em}\begin{equation}\label{hht.ph}
%{\bf O}\preceq
{\bf H}^T {\bf H}\preceq  {\bf P}_{\bf H}^2.
\vspace{-.3em}\end{equation}
%\eqref{hht.ph} holds.
}
\end{theorem}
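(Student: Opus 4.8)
The plan is to prove the operator inequality by testing it against an arbitrary signal ${\bf x}$, that is, to show
$\|{\bf H}{\bf x}\|_2^2 \le \sum_{i\in V} P_{\bf H}(i,i)^2\,|x(i)|^2$, which is precisely ${\bf x}^T{\bf H}^T{\bf H}{\bf x}\le {\bf x}^T{\bf P}_{\bf H}^2{\bf x}$. This is a graph-localized version of the classical Schur test, so the engine will be the triangle inequality followed by a weighted Cauchy--Schwarz with the row-sum weights $|H(i,j)|$.

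First I would bound, for each vertex $i$, the component $({\bf H}{\bf x})(i)=\sum_j H(i,j)x(j)$ by $\sum_j |H(i,j)|\,|x(j)|$, split each summand as $|H(i,j)|^{1/2}\cdot\big(|H(i,j)|^{1/2}|x(j)|\big)$, and apply Cauchy--Schwarz to get $|({\bf H}{\bf x})(i)|^2 \le r_i \sum_j |H(i,j)|\,|x(j)|^2$, where $r_i:=\sum_j |H(i,j)|$ is the absolute row sum at $i$. Summing over $i\in V$ and interchanging the order of summation yields $\|{\bf H}{\bf x}\|_2^2 \le \sum_{j\in V} |x(j)|^2 \big(\sum_{i\in V} r_i\,|H(i,j)|\big)$, so the whole statement reduces to the pointwise bound $\sum_{i\in V} r_i\,|H(i,j)| \le P_{\bf H}(j,j)^2$ for every $j\in V$.

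To get that bound I would invoke the geodesic-width hypothesis \eqref{bandwidth.def}: the only indices $i$ contributing have $H(i,j)\ne 0$, hence $\rho(i,j)\le\omega({\bf H})$, i.e.\ $i\in B(j,\omega({\bf H}))$. For such an $i$, the row sum $r_i=\sum_{k\in B(i,\omega({\bf H}))}|H(i,k)|$ is one of the two quantities forming $d(i)$ in Step~1 of Algorithm~\ref{preconditioningmatrix.algorithm}, so $r_i\le d(i)\le \max_{k\in B(j,\omega({\bf H}))} d(k)=P_{\bf H}(j,j)$, where the middle inequality is exactly the outer maximum in \eqref{da.def} evaluated at $j$. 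Pulling this factor out leaves $\sum_{i\in V} r_i\,|H(i,j)| \le P_{\bf H}(j,j)\sum_{i\in V}|H(i,j)|$, and the remaining factor is the absolute column sum at $j$, the other quantity inside $d(j)$, hence $\le d(j)\le P_{\bf H}(j,j)$. Multiplying the two gives $\sum_{i\in V} r_i\,|H(i,j)|\le P_{\bf H}(j,j)^2$, which combined with the previous paragraph completes the proof.

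The step I expect to be the crux is recognizing why the two-hop structure of ${\bf P}_{\bf H}$ — the maximum over $k\in B(i,\omega({\bf H}))$ in \eqref{da.def}, rather than $d(i)$ alone — is exactly what the argument needs: after Cauchy--Schwarz and the swap of summations, one must dominate the row sum $r_i$ of a vertex $i$ lying within $\omega({\bf H})$ hops of $j$ (not of $i$ itself), so a one-hop preconditioner would not suffice. The other delicate point is the choice of Cauchy--Schwarz weights: taking them to be $|H(i,j)|$ is what turns the leftover factor into a column sum after the interchange, so that both factors are controlled by $P_{\bf H}(j,j)$; a different weighting would stall the estimate.
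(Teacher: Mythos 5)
Your proposal is correct and follows essentially the same route as the paper's proof: a Schur-test style Cauchy--Schwarz with the $|H|^{1/2}$ splitting, an interchange of summations, and then domination of the relevant row sum by $P_{\bf H}(j,j)$ via the neighborhood maximum in \eqref{da.def} (using the geodesic-width localization) and of the column sum by $d(j)\le P_{\bf H}(j,j)$. The only difference is presentational (you isolate the pointwise bound $\sum_i r_i|H(i,j)|\le P_{\bf H}(j,j)^2$ explicitly), so no further comparison is needed.
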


\begin{proof} % Write ${\bf H}=(H(i,j))_{i,j\in V}$.
For ${\bf x}=(x(i))_{i\in V}$, we have
\vspace{-.5em}
\begin{eqnarray*}
  0 & \hskip-0.08in \le & \hskip-0.08in  {\bf x}^T {\bf H}^T{\bf H} {\bf x}  % \|{\bf H}{\bf x}\|_2^2 % {\bf x}^T{\bf A}^T{\bf A} {\bf x}
   = \sum_{j\in V} \Big| %\left(
   \sum_{i\in V} %V\ni i \ {\rm with} \ \rho(k,i)\leq \sigma }
   H(j,i)x(i) \Big|^2\\
%    & \leq & \sum_{k\in V}\left(\sum_{V\ni i \ {\rm with} \ \rho(k,i)\leq \sigma}\Big(|A(k,i)|^{1/2}|x_i|\Big) \times |A(k,i)|^{1/2}\right)^2\\
  %  &\leq &\sum_{k\in V}\left(\sum_{ i\in B(k, \omega({\bf H}))}|H(k,i)||x_i|^2 \sum_{ i'\in B(k,\omega({\bf H})) }|H(k,i')|\right)\\
  & \hskip-0.08in \leq & \hskip-0.08in\sum_{j\in V}\Big(\sum_{ i\in V}|H(j,i)||x(i)|^2\Big)\times \Big( \sum_{ i'\in V}|H(j,i')|\Big)\\
   %     &\le  & \sum_{k\in V}\sum_{i \in B(k, \omega({\bf H})) } |H(k,i)| d_k |x_i|^2\\
   & \hskip-0.08in = & \hskip-0.08in\sum_{i \in V} |x(i)|^2  \sum_{j\in  B(i, \omega({\bf H})) }   |H(j,i)| \times \Big( \sum_{ i'\in V}|H(j,i')|\Big)     \nonumber\\
   & \hskip-0.08in \leq & \hskip-0.08in \sum_{i \in V} |x(i)|^2  P_{\bf H}(i,i) %\Big(\max_{j'\in B(i, \omega({\bf H}))} \beta_{j'}\Big)
    \sum_{j\in  B(i, \omega({\bf H})) } |H(j,i)|
     \nonumber\\
%    & = & \sum_{i\in V }\left (\sum_{V\ni k\ {\rm with} \  \rho(k,i)\leq \sigma} d_k |A(k,i)|  \right)|x_i|^2\\
%    &\leq & \sum_{i\in V}|x_i|^2\left(\max_{V\ni k \ {\rm with } \  \rho(k,i)\leq \sigma}d_k\right)
%    \left( \sum_{k\in V \ {\rm with} \ \rho(k,i)\leq \sigma}|A(k,i)|\right)\\
   & \hskip-0.08in \le  & \hskip-0.08in \sum_{i\in V}  ({ P}_{\bf H}(i, i))^2 |x(i)|^2={\bf x}^T  {\bf P}_{\bf H}^2 {\bf x}. % {\bf x}^T{\bf D}^2{\bf x},
\vspace{-.7em}
\end{eqnarray*}
%where $\beta_j= \sum\limits_{i\in B(j, \omega({\bf H})) }|H(j,i)|, j\in V$.
%$$p_i=\max_{k\in B(i, \omega({\bf H}))} \Big\{\sum_{j\in B(k, \omega({\bf H}))} |H(j,k)| ,  \sum_{j\in B(k, \omega({\bf H}))} |H(k,j)|\Big\},$$
This  proves \eqref{hht.ph} and  completes the proof.
\end{proof}

Denote the spectral radius of a matrix ${\bf A}$ by $r({\bf A})$.
By
Theorem  \ref{hht.tm},  ${\bf P}_{\bf H}^{-2} {\bf H}^T$ is an approximation  filter to the inverse filter ${\bf H}^{-1}$ in the sense that %have
\vspace{-0.3em}
\begin{equation}\label{hht.singularvalue}
r( {\bf I}-{\bf P}_{\bf H}^{-2} {\bf H}^T
{\bf H})=
r( {\bf I}-{\bf P}_{\bf H}^{-1} {\bf H}^T
{\bf H}{\bf P}_{\bf H}^{-1})<1.
%\vspace{-0.3em}
\end{equation}

\begin{remark}{\rm
Define the Schur norm of a matrix  ${\bf H}=(H(i,j))_{i,j\in V}$  by
\vspace{-.5em}\begin{equation*}\label{schurnorm.def}
\hskip-.04in \|{\bf H}\|_{\mathcal S}   =  \max \Big\{ \max_{i\in V} \sum_{j\in V} |H(i,j)|, \
\max_{j\in V} \sum_{i\in V} |H(i, j)|\Big\}, %\hskip-.05in
\vspace{-.5em}
\end{equation*}
and denote  the zero and identity matrices of appropriate size by ${\bf O}$ and  ${\bf I}$ respectively.
One may verify that
\vspace{-.3em}
\begin{equation}\label{hht.schur}
{\bf O}\prec {\bf H}^T {\bf H} \preceq\|{\bf H}\|_{\mathcal S}^2 {\bf I}.
\vspace{-.5em}
\end{equation}
%where ${\bf O}$ and  ${\bf I}$ are   the zero and identity matrices of appropriate size respectively.
By \eqref{da.def}, % and \eqref{schurnorm.def},
we have %  we obtain  % the right preconditioner ${\bf P}_{\bf H}$
%satisfies
\vspace{-.7em}
\begin{equation}\label{da.schur}
{\bf P}_{\bf H}\preceq\|{\bf H}\|_{\mathcal S} {\bf I}. \vspace{-.5em}\end{equation}
Then we may consider the conclusion
\eqref{hht.ph} for the  preconditioner ${\bf P}_{\bf H}$ %in Theorem \ref{hht.tm}
as a distributed version of the well-known matrix dominance  \eqref{hht.schur} for the graph filter ${\bf H}$.
}
\end{remark}

 \begin{algorithm}[t]
\caption{Implementation of the PGDA \eqref{gradientdescent.al2.default} at a vertex $i\in V$. }
\label{distributed_ICPA.algorithm}
\begin{algorithmic}  %[1]

\STATE {\bf Inputs}:
Iteration number $M$,
geodesic-width $\omega({\bf H})$,
preconditioning constant $P_{\bf H}(i, i)$, observation $y(i)$ at vertex $i$,
 and
  filter  coefficients $H(i,j)$ and $H(j, i), \ j \in B(i,\omega({\bf H}))$.

\STATE{\bf 1)} Calculate $\widetilde H(j,i)= H(j,i)/(P_{\bf H}(i, i))^2$.
%\STATE {\bf Operation}: Evaluate $m_k=\mu(B(k, r))$, compute ${\bf F}_k= {\bf H}_{0,k}^T{\bf H}_{0,k}+ {\bf H}_{1,k}^T{\bf H}_{1,k}$,
% find its inverse  $({\bf F}_k)^{-1}$, and then compute $ {\bf G}^L_{l; k}:=({\bf F}_k)^{-1} {\bf H}_{l,k}^T, l=0, 1$.

%  , and a local approximation
%${\bf G}_k=(\tilde f_k(i,j))_{i,j\in B(k, 2r)}$  to the matrix ${\bf H}$
% and compute ${\bf F}_k= {\bf H}_{0,k}^T{\bf H}_{0,k}+ {\bf H}_{1,k}^T{\bf H}_{1,k}$ and
%$({\bf F}_k)^{-1}=(\tilde f_k(i,j))_{i,j\in B(k, 2r)}$

\STATE {\bf Initialization}:  Initial $x^{(0)}(j), j\in B(i, \omega({\bf H}))$, and $m=1$.

\STATE{\bf 2)} Calculate $v^{(m)}(i)= y(i)- \sum\limits_{j\in B(i,\omega({\bf H}))}H(i,j) x^{(m-1)}(j)$.
\STATE{\bf 3)} Send $v^{(m)}(i)$ to neighbors $j\in B(i,\omega({\bf H}))$
and receive $v^{(m)}(j)$ from neighbors $j\in B(i,\omega({\bf H}))$.

\STATE{\bf 4)} Update \\
 $x^{(m)}(i)=x^{(m-1)}(i) + \sum\limits_{j\in B(i, \omega({\bf H}))}\widetilde H(j, i) v^{(m)}(j).$

 \STATE{\bf 5)}   Send $x^{(m)}(i)$ to neighbors $j\in B(i,\omega({\bf H}))$ and receive $x^{(m)}(j)$ from neighbors $j\in B(i,\omega({\bf H}))$.
 % and set $x^{(m)}(j)=x^{(m-1)}(j)$ if the updated data is not received. % from the neighbor $j\in B(i,L)$

\STATE{\bf 6)}
 Set $m=m+1$ and return to  Step   {\bf 2)} if $m\le M$.

\STATE {\bf Outputs}:  $x(j):=x^{(M)}(j), \ j\in B(i, \omega({\bf H}))$.  %$\{x_k^{(m+1)} (i)\}_{i\in B(k,r)}$.

\end{algorithmic}\vspace{-.03in}
\end{algorithm}

Preconditioning technique has been widely used in numerical analysis to solve a linear system, where the difficulty is how to select the preconditioner appropriately.
In this letter, we  use  ${\bf P}_{\bf H}$   as a right preconditioner to
  the linear system
\vspace{-.5em}\begin{equation}\label{inversefiltering.eq}
{\bf H}{\bf x}={\bf y}
\vspace{-.5em}\end{equation}
associated with the inverse filtering procedure \eqref{inversefilterprocedure.def},
and we solve
 the following right preconditioned linear system
\vspace{-.5em}\begin{equation}\label{Preconditioningsystem}
{\bf H} {\bf P}_{\bf H}^{-1} {\bf z}={\bf y}\ \ {\rm and}\ \ {\bf x}= {\bf P}_{\bf H}^{-1} {\bf z},
\vspace{-.5em}\end{equation}
via  %lect the diagonal matrix ${\bf P}_{\bf H}$
 %as the right preconditioner to the linear system \eqref{inversefiltering.eq}, then we  apply
 the  gradient descent algorithm
  	\vspace{-.3em}	\begin{equation*}\label{gradientdescent.al}
	\left\{\begin{array}{l}		{\bf z}^{(m)}   =   {\bf z}^{(m-1)}- {\bf P}_{\bf H}^{-1} {\bf H}^T
\big( {\bf H}{\bf P}_{\bf H}^{-1} {\bf z}^{(m-1)}- {\bf y}\big)\\ %\nonumber\\
{\bf x}^{(m)}=  {\bf P}_{\bf H}^{-1} {\bf z}^{(m)}, \
\ m\ge 1,\end{array}\right.
\vspace{-.3em}
			\end{equation*}
with initial ${\bf z}^{(0)}$.  The above iterative algorithm %\eqref{gradientdescent.al}
 can be reformulated as a quasi-Newton method \eqref{Approximationalgorithm} with ${\bf G}$ replaced by ${\bf P}_{\bf H}^{-2}{\bf H}^T$,
	\vspace{-0.3em}	\begin{equation}\label{gradientdescent.al2.default}
	\left\{ \begin{array}{l}
	{\bf e}^{(m)}= {\bf H}{\bf x}^{(m-1)}-{\bf y}\\
	{\bf x}^{(m)}={\bf x}^{(m-1)}-{\bf P}_{\bf H}^{-2}{\bf H}^T{\bf e}^{(m)}, \ \ m\ge 1
	\end{array} \right.
	%\vspace{-.4em}\end{equation}
%	{\bf x}^{(m)} =  ({\bf I}- {\bf P}_{\bf H}^{-2} {\bf H}^T {\bf H}) {\bf x}^{(m-1)}+
%{\bf P}_{\bf H}^{-2}{\bf H}^T {\bf y}, \
	\vspace{-0.3em}		\end{equation}
with initial ${\bf x}^{(0)}$.
We call the above approach to implement the inverse filtering procedure \eqref{inversefilterprocedure.def} % ${\bf y}\mapsto {\bf H}^{-1}{\bf y}$
 by the  {\em  preconditioned gradient descent algorithm}, or PGDA for  abbreviation.
% where  $\gamma$  is an appropriate step length  so that the spectrum of

Define
${\bf w}_m:={\bf P}_{\bf H}({\bf x}^{(m)}-{\bf H}^{-1} {\bf y}), \ m\ge 0$,
%\vspace{-.3em}\begin{equation}\label{wm.def0}
%{\bf w}_m:={\bf P}_{\bf H}({\bf x}^{(m)}-{\bf H}^{-1} {\bf y}), \ m\ge 0. \vspace{-.3em}\end{equation}
and  the norm $\|{\bf x}\|_2=(\sum_{j\in V} |x(j)|^2)^{1/2}$ for  ${\bf x}=(x_j)_{j\in V}$.
By \eqref{gradientdescent.al2.default}, %{gradientdescent.al}, % and \eqref{wm.def0},
we have
\vspace{-.3em}\begin{equation} \label{wm.def01}
{\bf w}_m=
\big({\bf I}- {\bf P}_{\bf H}^{-1} {\bf H}^T {\bf H} {\bf P}_{\bf H}^{-1}\big)
{\bf w}_{m-1}, \ m\ge 1.  %{\bf P}_{\bf H}({\bf x}^{(m-1)}-{\bf H}^{-1} {\bf y}), \ m\ge 1.
\vspace{-.3em}\end{equation}
Therefore  the iterative  algorithm \eqref{gradientdescent.al2.default} %{gradientdescent.al}
converges exponentially by \eqref{hht.singularvalue} and \eqref{wm.def01}.

 \begin{theorem}\label{exponentialconvergence.tm}
{ \rm Let ${\bf H}=(H(i,j))_{i,j\in V}$ be an invertible graph filter %with geodesic-width  $\omega({\bf H}) \le L$
 and  ${\bf x}^{(m)}, m\ge 0$, be as in \eqref{gradientdescent.al2.default}. %\eqref{gradientdescent.al}. %{gradientdescent.al2.default}.
 Then
\vspace{-0.4em}
\begin{eqnarray*}
\|{\bf P}_{\bf H}({\bf x}^{(m)}-{\bf H}^{-1} {\bf y})\|_2 & \hskip-0.08in  \le & \hskip-0.08in
\big(r( {\bf I}-{\bf P}_{\bf H}^{-1} {\bf H}^T
{\bf H}{\bf P}_{\bf H}^{-1})\big)^m \nonumber\\
   \hskip-0.08in &
 &  \hskip-0.08in
 \times \|{\bf P}_{\bf H}({\bf x}^{(0)}-{\bf H}^{-1} {\bf y})\|_2, \ m\ge 0.
\end{eqnarray*}}
\end{theorem}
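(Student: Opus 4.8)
The plan is to exploit the recursion already recorded in \eqref{wm.def01}, together with the symmetry of the iteration matrix. Write ${\bf w}_m:={\bf P}_{\bf H}({\bf x}^{(m)}-{\bf H}^{-1}{\bf y})$ and set ${\bf M}:={\bf I}-{\bf P}_{\bf H}^{-1}{\bf H}^T{\bf H}{\bf P}_{\bf H}^{-1}$. First I would note that ${\bf M}$ is well defined and symmetric: since ${\bf H}$ is invertible it has no zero row, so each $d(i)>0$ and hence each diagonal entry $P_{\bf H}(i,i)=\max_{k\in B(i,\omega({\bf H}))}d(k)>0$; thus the diagonal matrix ${\bf P}_{\bf H}$ is invertible, and as ${\bf H}^T{\bf H}$ is symmetric, so are ${\bf P}_{\bf H}^{-1}{\bf H}^T{\bf H}{\bf P}_{\bf H}^{-1}$ and therefore ${\bf M}$.

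Next, iterating the identity \eqref{wm.def01} gives ${\bf w}_m={\bf M}^m{\bf w}_0$ for all $m\ge 0$. Taking the Euclidean norm and using submultiplicativity, $\|{\bf w}_m\|_2\le \|{\bf M}^m\|_2\,\|{\bf w}_0\|_2$. The key step is then the elementary spectral fact that for a symmetric matrix the operator $2$-norm equals the spectral radius, and that this persists under powers: ${\bf M}^m$ is symmetric with eigenvalues the $m$-th powers of those of ${\bf M}$, so $\|{\bf M}^m\|_2=r({\bf M}^m)=\big(r({\bf M})\big)^m$. Combining these bounds yields $\|{\bf w}_m\|_2\le \big(r({\bf M})\big)^m\|{\bf w}_0\|_2$, which is exactly the asserted inequality once ${\bf w}_m$ and ${\bf w}_0$ are written out; the contraction estimate $r({\bf M})<1$ of \eqref{hht.singularvalue}, itself a consequence of Theorem \ref{hht.tm}, then turns this into genuine exponential decay.

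I do not expect a real obstacle here: the substantive content — the recursion \eqref{wm.def01} and the spectral bound \eqref{hht.singularvalue} resting on Theorem \ref{hht.tm} — is already in place, so the proof only has to assemble these ingredients and invoke the symmetry of ${\bf M}$ to convert a spectral-radius estimate into a norm bound on ${\bf M}^m$. If one prefers to sidestep even the remark about powers of symmetric matrices, one can instead argue step by step, $\|{\bf w}_m\|_2=\|{\bf M}{\bf w}_{m-1}\|_2\le \|{\bf M}\|_2\,\|{\bf w}_{m-1}\|_2$, and close the induction using $\|{\bf M}\|_2=r({\bf M})$; either route is a two-line computation.
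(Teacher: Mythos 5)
Your proof is correct and follows essentially the same route as the paper, which deduces the theorem directly from the error recursion \eqref{wm.def01} and the spectral bound \eqref{hht.singularvalue}; your only addition is to spell out the details the paper leaves implicit (invertibility of ${\bf P}_{\bf H}$ and the fact that the symmetric iteration matrix has $2$-norm equal to its spectral radius, so $\|{\bf M}^m\|_2=(r({\bf M}))^m$). No gaps.
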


 In addition to the exponential convergence in Theorem \ref{exponentialconvergence.tm},
 the PGDA is that each iteration   %is a distributed algorithm from \eqref{da.def} and \eqref{hht.ph}, and
can be implemented at vertex level, see  Algorithm
\ref{distributed_ICPA.algorithm}. Therefore
for an invertible filter ${\bf H}$ with $\omega({\bf H})\le L$,
  the PGDA \eqref{gradientdescent.al2.default} can implement the  inverse filtering procedure
\eqref{inversefilterprocedure.def} on  SDNs  with each agent only storing, computing and exchanging  the information  in a  $L$-hop neighborhood.

\vspace{-.06in}

\section{Symmetric preconditioned gradient descent algorithm for inverse filtering}\label{symmetricPGDA.section}

%\vspace{-.03in}

 \begin{algorithm}[t]
\caption{Implementation of the SPGDA \eqref{Sym_neumann.Alg} at a vertex $i\in V$. }
\label{SPGDA.algorithm}
\begin{algorithmic}  %[1]

\STATE {\bf Inputs}:
Iteration number $M$,  geodesic-width $\omega({\bf H})$,
observation $y(i)$ at vertex $i$,
 and   filter  coefficients $H(i,j)$ and $H(j, i),  j \in B(i,\omega({\bf H}))$.

%\STATE {\bf Operation}: Evaluate $m_k=\mu(B(k, r))$, compute ${\bf F}_k= {\bf H}_{0,k}^T{\bf H}_{0,k}+ {\bf H}_{1,k}^T{\bf H}_{1,k}$,
% find its inverse  $({\bf F}_k)^{-1}$, and then compute $ {\bf G}^L_{l; k}:=({\bf F}_k)^{-1} {\bf H}_{l,k}^T, l=0, 1$.

%  , and a local approximation
%${\bf G}_k=(\tilde f_k(i,j))_{i,j\in B(k, 2r)}$  to the matrix ${\bf H}$
% and compute ${\bf F}_k= {\bf H}_{0,k}^T{\bf H}_{0,k}+ {\bf H}_{1,k}^T{\bf H}_{1,k}$ and
%$({\bf F}_k)^{-1}=(\tilde f_k(i,j))_{i,j\in B(k, 2r)}$

\STATE {\bf 1)} Calculate $P_{\bf H}^{\rm sym}(i,i)=\sum\limits_{j\in B(i, \omega({\bf H}))} |H(i,j)|$,
  $\widetilde H(i,j)={ H(i, j)}/{P_{\bf H}^{\rm sym}(i, i)} $ and $\tilde y(i)={y(i)}/{P_{\bf H}^{\rm sym}(i, i)}, j\in B(i, \omega({\bf H})) $.

\STATE {\bf Initialization}:  Initial $x^{(0)}(j), j\in B(i, \omega({\bf H}))$ and $m=1$.

\STATE{\bf 2)} Compute\\
 $x^{(m)}(i)=x^{(m-1)}(i) + \tilde y(i)- \sum\limits_{j\in B(i, \omega({\bf H}))}\widetilde H(i, j) x^{(m-1)}(j)$.

 \STATE{\bf 3)}   Send $x^{(m)}(i)$ to neighbors $j\in B(i,\omega({\bf H}))$ %if   $| x^{(m)}(i)-x^{(m-1)}(i)| \geq \varepsilon$.
and  receive $x^{(m)}(j)$ from neighbors $j\in B(i,\omega({\bf H}))$.
% and set
%  $x^{(m)}(j)=x^{(m-1)}(j)$ if the updated data is not received. % from the neighbor $j\in B(i,L)$

\STATE{\bf 4)}
 Set $m=m+1$ and return to  Step   {\bf 2)} if $m\le M$.

\STATE {\bf Outputs}:  $x(j):=x^{(M)}(j), j\in B(i,\omega({\bf H}))$.  %$\{x_k^{(m+1)} (i)\}_{i\in B(k,r)}$.

\end{algorithmic}
\vspace{-.03in}
\end{algorithm}

%Let ${\mathcal G}=(V, E)$ be a  connected simple graph % with vertex set $V$ and edge set $E$,
%and
In this section, we consider implementing the inverse filtering procedure \eqref{inversefilterprocedure.def} associated with
a {\bf positive definite}  %graph
filter   ${\bf H}=(H(i,j))_{i,j\in V}$
   on   a   connected, undirected and unweighted graph  ${\mathcal G}$. %  with   geodesic-width $\omega({\bf H})\le L$.
Define the diagonal matrix ${\bf P}_{\bf H}^{\rm sym}$ with
 diagonal entries
 %${P}_{\bf H}^{\rm sym}(i, i), i\in V$, to be defined by
 \vspace{-.3em}\begin{equation}\label{DiagonalElementSymmetric}
{P}_{\bf H}^{\rm sym}(i, i)=\sum_{j\in B(i, \omega({\bf H}) )}|H(i,j)|,\  i\in V,
\vspace{-.3em}\end{equation}
and set
\vspace{-.3em}\begin{equation}
\label{Hsym.def}
\widehat{\bf H}=({\bf P}_{\bf H}^{\rm sym})^{-1/2} {\bf H} ({\bf P}_{\bf H}^{\rm sym})^{-1/2}. \vspace{-.3em}\end{equation}
We remark that the normalized matrix  %$\widehat {\bf H}$
 in \eqref{Hsym.def} associated with a diffusion matrix %${\bf L}$
has been used to understand diffusion process
 \cite{nadler2006},
 and the one
 corresponding to  the \mbox{Laplacian} ${\bf L}_{\mathcal G}
 $ on the graph ${\mathcal G}$
 is  half of its normalized Laplacian ${\bf L}_{\mathcal G}^{\rm sym}:=({\bf D}_{\mathcal G})^{-1/2} {\bf L}_{\mathcal G}({\bf D}_{\mathcal G})^{-1/2}$, where
%${\bf A}_{\mathcal G}$ and
${\bf D}_{\mathcal G}$ is degree matrix of  ${\mathcal G}$  \cite{jiang19}.
%introduce a symmetric preconditioned gradient descent algorithm to
 Similar to the PGDA \eqref{gradientdescent.al2.default},
we propose the following   {\em symmetric preconditioned gradient descent algorithm}, or SPGDA for  abbreviation,
 \vspace{-.3em} \begin{equation}\label{Sym_neumann.Alg}
{\bf x}^{({m})}={\bf  x}^{(m-1)}-({\bf P}^{\rm sym}_{\bf H})^{-1}({\bf H}{\bf  x}^{(m-1)}- {\bf y}),\  m\ge 1,
\vspace{-.3em} \end{equation}
% \vspace{-.3em} \begin{equation}\label{Sym_neumann.Alg}
%{\bf x}^{({m})}=\big({\bf I}-({\bf P}^{\rm sym}_{\bf H})^{-1}{\bf H}\big){\bf  x}^{(m-1)}+ ({\bf P}_{\bf H}^{\rm sym})^{-1} {\bf y},\  m\ge 1,
%\vspace{-.3em} \end{equation}
with initial ${\bf x}^{(0)}$,
%to solve the  preconditioned linear system
%\eqref{Preconditioningsystem.symmetric}
%and hence
  to
 solve the following  preconditioned linear system
\vspace{-.3em}\begin{equation}\label{Preconditioningsystem.symmetric}
\widehat{\bf H} {\bf z}=({\bf P}_{\bf H}^{\rm sym})^{-1/2}{\bf y}\ \ {\rm and} \ \
 {\bf x}= ({\bf P}_{\bf H}^{\rm sym})^{-1/2} {\bf z}.
\vspace{-.3em}\end{equation}
%$\widehat{\bf H} {\bf z}=({\bf P}_{\bf H}^{\rm sym})^{-1/2}{\bf y}$ and $
% {\bf x}= ({\bf P}_{\bf H}^{\rm sym})^{-1/2} {\bf z}$.
Comparing with the PGDA \eqref{gradientdescent.al2.default},
the SPGDA  %\eqref{Sym_neumann.Alg}
for a positive definite
graph filter has less computation and communication cost in each iteration
 and it also can be implemented at vertex level, % with confined communication range, % $\omega({\bf H})$,
 see Algorithm \ref{SPGDA.algorithm}.

%which is equivalent to ${\bf H}\preceq {\bf P}_{\bf H}^{\bf sym}$.
For ${\bf x}=(x(i))_{i\in V}$, we obtain from \eqref{DiagonalElementSymmetric} and
the symmetry of the matrix ${\bf H}$ that
\vspace{-.3em}\begin{equation*}
 {\bf x}^T{\bf H}{\bf x} %&\hskip-0.08in\le & \hskip-0.08in
%\sum_{i, j\in V} |H(i, j)| x_i| |x_j|\nonumber\\
\le   \sum_{i, j\in V}  |H(i,j)|\frac{(x(i))^2+(x(j))^2}{2}=
 {\bf x}^T{\bf P}_{\bf H}^{\rm sym}{\bf x}. % \le  {\bf x}^T{\bf P}_{\bf H}{\bf x}.
 %\pm {\bf x}^T{\bf H}{\bf x}\nonumber\\
%\ge {\bf x}^T{\bf P}_{\bf H}^{\bf sym}{\bf x} - {\bf x}^T|{\bf H}|{\bf x}\\
%&\hskip-0.08in=& \hskip-0.08in\frac{1}{2} \sum_{i, j\in V} \big(|H(i,j)|(x_i^2+x_j^2)\pm  2 H(i, j) x_i x_j\big)\\
%&=&\sum_{i\in V}\big((P_{\bf H}^{\bf sym}(i, i)-|H(i, i)|)x_i^2- \sum_{j\in V, j\neq i}|H(i, j)|x_i x_j\big)\\
%&=&\sum_{i\in V}\big((\sum_{j\in V}|H(i, j)|-|H(i, i)|)x_i^2- \sum_{j\in V, j\neq i}|H(i, j)|x_i x_j\big)\\
%& \hskip-0.08in\ge & \hskip-0.08in\frac{1}{2}\sum_{i,j\in V}|H(i,j)|(|x(i)|-|x(j)|)^2\geq 0.
\vspace{-.3em}\end{equation*}
Combining  \eqref{da.def}  %, \eqref{da.schur}
and \eqref{DiagonalElementSymmetric}  proves that
\vspace{-.3em}\begin{equation} \label{symmetric.remark.eq2}
 {\bf H}\preceq {\bf P}_{\bf H}^{\rm sym} \preceq  {\bf P}_{\bf H}, %\preceq \|{\bf H}\|_{\mathcal S} {\bf I}$,
\vspace{-.3em}\end{equation}
cf.  \eqref{hht.ph}. % and \eqref{da.schur}. % in Theorem \ref{hht.tm}.
% Comparing the  matrices ${\bf P}_{\bf H}$ in \eqref{da.def}
% and   ${\bf P}_{\bf H}^{\rm sym}$ in
%  \eqref{DiagonalElementSymmetric},
% we have
% \begin{equation}\label{symmetric.remark.eq1}
%{\bf  O} \preceq {\bf P}_{\bf H}^{\rm sym}.
% \end{equation}
This together with  \eqref{Hsym.def}
implies that
%${\bf O}\prec\widehat{\bf H}\preceq {\bf I}$
%for all  positive definite graph filters ${\bf H}$.
%${\bf O}\prec \widehat{\bf H}\preceq {\bf I} \ \ {\rm if}\  \ {\bf H}\succ {\bf O}.
% yields that %and \eqref{symmetric.remark.eq2} that
\vspace{-.3em}\begin{equation}
r({\bf I}-({\bf P}_{\bf H}^{\rm sym})^{-1} {\bf H})= r({\bf I}- \widehat {\bf H})<1.
%%{\bf O}\prec \widehat{\bf H}\preceq {\bf I} %\ \ {\rm if}\  \ {\bf H}\succ {\bf O}.
\vspace{-.3em}\end{equation}
%$ r({\bf I}-({\bf P}_{\bf H}^{\rm sym})^{-1} {\bf H})= r({\bf I}- \widehat {\bf H})<1 $.
 % for all  positive definite graph filters ${\bf H}$, cf. \eqref{hht.singularvalue}.
%${\bf O}\prec \widehat{\bf H}\preceq {\bf I}$
%for all  positive definite graph filters ${\bf H}$.
Similar to the proof of  Theorem
  \ref{exponentialconvergence.tm}, we have
  % can  establish
%

   \begin{theorem}\label{symmetricexponentialconvergence.tm}
{\rm Let ${\bf H}$ be a positive definite graph filter. % with geodesic-width $\omega({\bf H})$.
Then  ${\bf x}^{(m)}, m\ge 0$,  in \eqref{Sym_neumann.Alg} converges exponentially, %. Then
\vspace{-.3em}\begin{eqnarray*} \label{exponentialcovergence}
  \hskip-0.08in & \hskip-0.08in &\nonumber \|({\bf P}^{\rm sym}_{\bf H})^{1/2}({\bf x}^{(m)}-{\bf H}^{-1} {\bf y})\|_2 \\
  & \hskip-0.08in \le & \hskip-0.08in
  \big(r({\bf I}-({\bf P}_{\bf H}^{\rm sym})^{-1} {\bf H})\big)^m
% \big(1- \sigma_S(\widehat {\bf H}) \big)^m
\big\|({\bf P}^{\rm sym}_{\bf H})^{1/2}({\bf x}^{(0)}-{\bf H}^{-1} {\bf y})\big\|_2. %, \ m\ge 1.
\vspace{-.3em}\end{eqnarray*}
%hold for all $m\ge 1$.
%where
%$\tilde r$ is the smallest eigenvalue of the matrix $ \widehat{\bf H}$ in \eqref{Hsym.def}.}
}
\end{theorem}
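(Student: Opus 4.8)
The plan is to mimic the argument already sketched for Theorem~\ref{exponentialconvergence.tm}, with the symmetrically preconditioned filter $\widehat{\bf H}$ of \eqref{Hsym.def} playing the role that ${\bf P}_{\bf H}^{-1}{\bf H}^T{\bf H}{\bf P}_{\bf H}^{-1}$ played there. First I would introduce the error vectors ${\bf w}_m := ({\bf P}_{\bf H}^{\rm sym})^{1/2}({\bf x}^{(m)}-{\bf H}^{-1}{\bf y})$, $m\ge 0$; this is legitimate because ${\bf P}_{\bf H}^{\rm sym}$ is a positive diagonal matrix, its $i$-th diagonal entry $\sum_{j\in B(i,\omega({\bf H}))}|H(i,j)|$ being strictly positive since $H(i,i)>0$ for the positive definite filter ${\bf H}$, so that $({\bf P}_{\bf H}^{\rm sym})^{\pm 1/2}$ are well-defined positive diagonal matrices.

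Next I would derive the one-step recursion for ${\bf w}_m$. Writing ${\bf y}={\bf H}{\bf H}^{-1}{\bf y}$ and subtracting ${\bf H}^{-1}{\bf y}$ from both sides of \eqref{Sym_neumann.Alg} gives ${\bf x}^{(m)}-{\bf H}^{-1}{\bf y}=\big({\bf I}-({\bf P}_{\bf H}^{\rm sym})^{-1}{\bf H}\big)({\bf x}^{(m-1)}-{\bf H}^{-1}{\bf y})$. Multiplying on the left by $({\bf P}_{\bf H}^{\rm sym})^{1/2}$ and inserting $({\bf P}_{\bf H}^{\rm sym})^{-1/2}({\bf P}_{\bf H}^{\rm sym})^{1/2}={\bf I}$ turns this into ${\bf w}_m=\big({\bf I}-({\bf P}_{\bf H}^{\rm sym})^{-1/2}{\bf H}({\bf P}_{\bf H}^{\rm sym})^{-1/2}\big){\bf w}_{m-1}=({\bf I}-\widehat{\bf H}){\bf w}_{m-1}$ by \eqref{Hsym.def}, and hence ${\bf w}_m=({\bf I}-\widehat{\bf H})^m{\bf w}_0$ for all $m\ge 0$.

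The crucial point --- and the only place where positive definiteness of ${\bf H}$ is used beyond mere invertibility --- is that $\widehat{\bf H}$, being $({\bf P}_{\bf H}^{\rm sym})^{-1/2}{\bf H}({\bf P}_{\bf H}^{\rm sym})^{-1/2}$ with ${\bf H}$ symmetric and ${\bf P}_{\bf H}^{\rm sym}$ diagonal, is itself symmetric, so that ${\bf I}-\widehat{\bf H}$ is symmetric and its spectral norm equals its spectral radius. Therefore $\|{\bf w}_m\|_2\le\|{\bf I}-\widehat{\bf H}\|_2^m\,\|{\bf w}_0\|_2=\big(r({\bf I}-\widehat{\bf H})\big)^m\|{\bf w}_0\|_2$, and since $\widehat{\bf H}$ is similar to $({\bf P}_{\bf H}^{\rm sym})^{-1}{\bf H}$ we have $r({\bf I}-\widehat{\bf H})=r({\bf I}-({\bf P}_{\bf H}^{\rm sym})^{-1}{\bf H})$; substituting this into the previous line yields exactly the claimed estimate.

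Finally I would recall why the decay is genuinely exponential, i.e.\ why $r({\bf I}-\widehat{\bf H})<1$: positive definiteness of ${\bf H}$ together with \eqref{symmetric.remark.eq2} gives ${\bf O}\prec{\bf H}\preceq{\bf P}_{\bf H}^{\rm sym}$, and conjugating by $({\bf P}_{\bf H}^{\rm sym})^{-1/2}$ yields ${\bf O}\prec\widehat{\bf H}\preceq{\bf I}$, so the eigenvalues of $\widehat{\bf H}$ lie in $(0,1]$ and those of ${\bf I}-\widehat{\bf H}$ in $[0,1)$, which is the bound already recorded just before the theorem. I do not anticipate any genuine obstacle; the one bookkeeping subtlety worth stating carefully is the passage from the operator norm of the non-symmetric iteration matrix ${\bf I}-({\bf P}_{\bf H}^{\rm sym})^{-1}{\bf H}$ to its spectral radius, achieved through the symmetrization \eqref{Hsym.def}, exactly as in the transition from \eqref{wm.def01} to Theorem~\ref{exponentialconvergence.tm}.
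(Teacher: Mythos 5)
Your proof is correct and follows essentially the same route the paper intends: form the weighted error ${\bf w}_m=({\bf P}_{\bf H}^{\rm sym})^{1/2}({\bf x}^{(m)}-{\bf H}^{-1}{\bf y})$, obtain the recursion ${\bf w}_m=({\bf I}-\widehat{\bf H}){\bf w}_{m-1}$, and use the symmetry of $\widehat{\bf H}$ (so spectral norm equals spectral radius) together with the similarity to $({\bf P}_{\bf H}^{\rm sym})^{-1}{\bf H}$ and the bound $r({\bf I}-\widehat{\bf H})<1$ coming from \eqref{symmetric.remark.eq2}, exactly as in the transition from \eqref{wm.def01} to Theorem \ref{exponentialconvergence.tm}.
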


\vspace{-.15in}
\section{Numerical simulations}\label{sec:num}
%\cc{Why SPGDA outperforms the PGDA, can you explain it intuitively in the numerical section, and then we can say the reason for choose the positive definite $\bf H$. }

\vspace{-.01in}

\begin{figure}[t] %[h] %[h]
\begin{center}
\includegraphics[width=27mm, height=25mm]{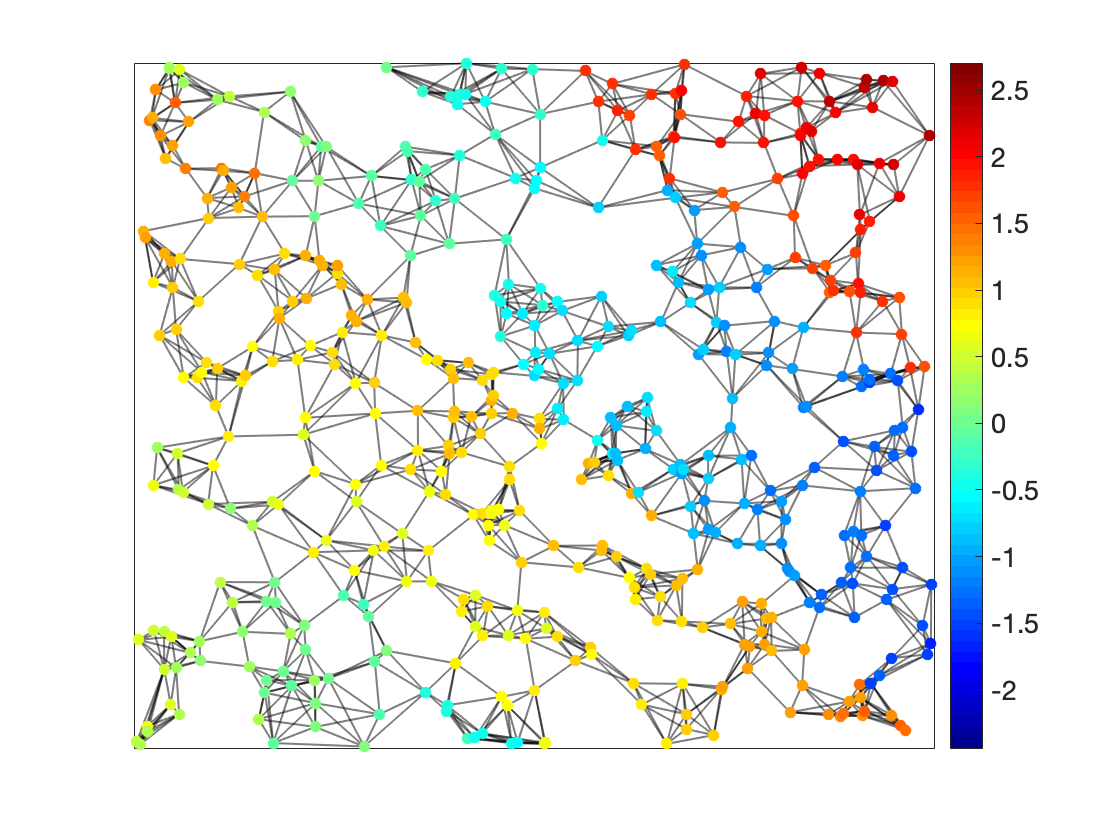}
\includegraphics[width=27mm, height=25mm]{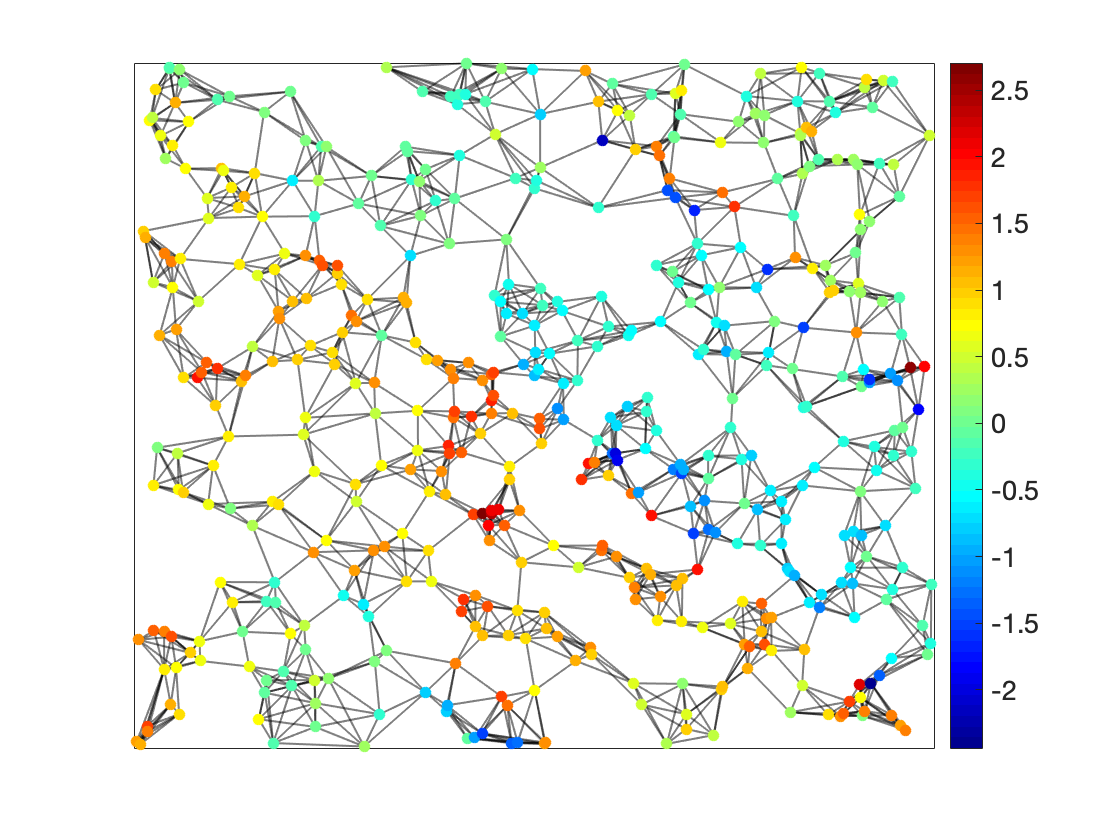}
\includegraphics[width=32mm, height=25mm]{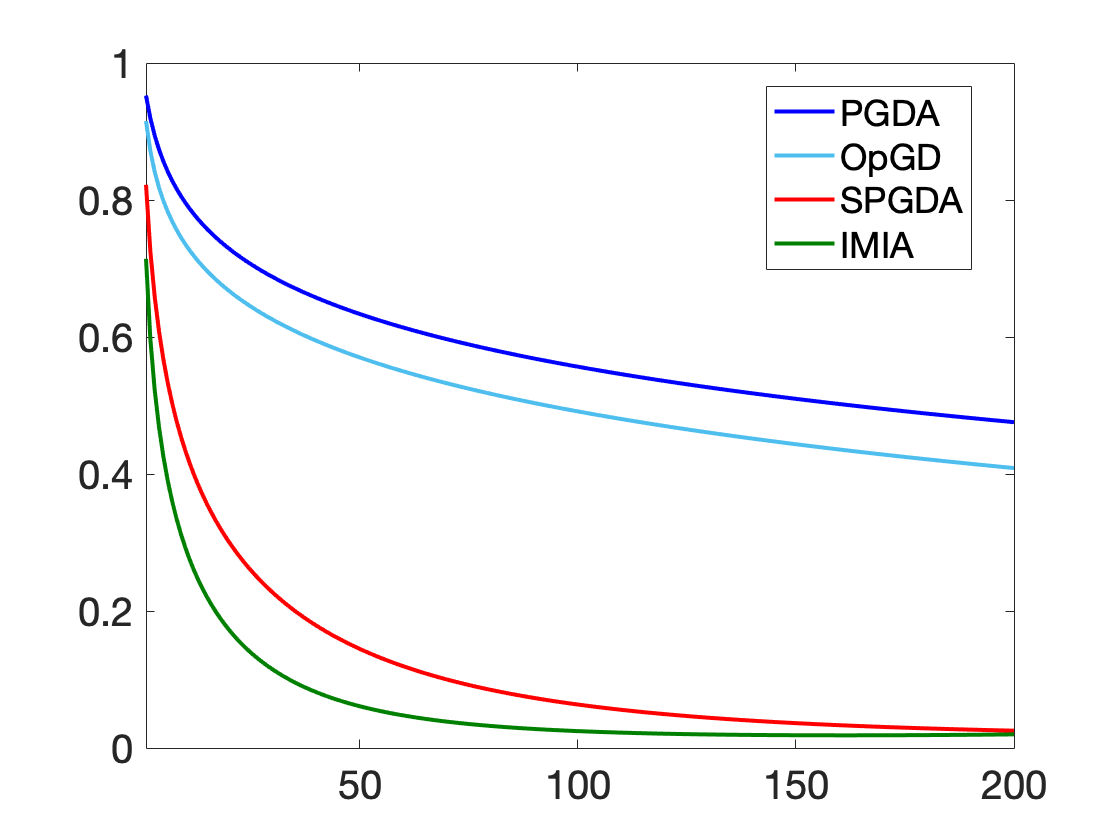}
\caption{\small Plotted on the left  is  a corrupted  blockwise polynomial signal ${\bf x}$
and in the middle is the output   ${\bf y}={\bf H} {\bf x}$ of the filtering procedure,
where $\|{\bf x}\|_2=24.8194, \|{\bf y}\|_2=21.5317$ and  the  condition number of the filter ${\bf H}$ is $107.40$. % with $\alpha=1/2$.
 Shown on the right is average of the  relative inverse filtering error  $E_2(m)= {\|{\bf x}^{(m)}-{\bf x}\|_2}/{\|{\bf x}\|_2}, 1\le m\le  200$
over 1000 trials, where $\eta=0.2, \gamma=0.05$ and ${\bf x}^{(m)}$, $m\ge 1$, are the outputs of  SPGDA, PGDA,  OpGD and IMIA.
% with the parameter $\alpha$ in the graph filter ${\bf H}_o$ being $1/2$ and $1$ respectively.
  % $E_2(m), 1\le m\le 100$,  %and  $E_\infty(m, {\bf x})$
% for the SPGDA, PGDA, OpGD  to implement the inverse filtering ${\bf y}_o\mapsto {\bf H}_o^{-1}{\bf y}_o$.
}
\label{comp_alg}
\end{center}
\vspace{-.25in} %\vspace{-1em}
\end{figure}

Let  ${\mathcal G}_{512}=(V_{512}, E_{512})$ be a random geometric graph with 512 vertices %randomly
deployed on  % the unit square
 $[0, 1]^2$ and
  an undirected edge between two vertices if their physical
distance is not larger than
$\sqrt{2/512} %=1/16
$  \cite{jiang19, Nathanael2014}.
%=({\bf D}_{{\mathcal G}_{512}})^{-1/2} {\bf L}_{{\mathcal G}_{512}}({\bf D}_{{\mathcal G}_{512}})^{-1/2}$, where
%${\bf A}_{\mathcal G}$ and
%${\bf D}_{{\mathcal G}_{512}}$ is degree matrix of  ${\mathcal G}_{512}$  \cite{jiang19}..
  In the first simulation, we consider
  the inverse filtering procedure associated with the graph filter
\vspace{-.8em}
\begin{equation*}{\bf H}= {\bf H}_o+ ({\bf L}_{{\mathcal G}_{512}}^{\rm sym})^2,
 \vspace{-.4em}\end{equation*}
 where  $ {\bf L}_{{\mathcal G}_{512}}^{\rm sym}$ is   the
  normalized Laplacian on the graph ${\mathcal G}_{512}$,
 the filter ${\bf H}_o=(H_o(i, j))_{i, j\in V_{512}}$  is defined by  %of corrupted Gaussian type, % corrupted by  % to be defined by
\vspace{-.3em}\begin{eqnarray*} \hskip-0.08in &  \hskip-0.08in &  \hskip-0.08in H_o(i, j)  =   \exp\Big(-2\times 512 \times \|(i_x, i_y)-(j_x, j_y)\|_2^2-\\
 \hskip-0.08in &   \hskip-0.08in  & \quad \quad -\frac{\|(i_x, i_y)+(j_x, j_y)\|_2^2}{2} \Big)+\frac{\gamma_{ij}+\gamma_{ji}}{2}, \ \rho(i,j)\le 2,
\vspace{-.3em}\end{eqnarray*}
 %$\alpha=1/2$, or $1$, and
$(i_x, i_y)$ is the coordinator of a vertex $i\in V_{512}$ and
$\gamma_{ij}$ are i.i.d random noises uniformly distributed on $[-\gamma, \gamma]$.
%The above graph filter ${\bf H}_o$ is a positive definite matrix with
%$\sigma_S({\bf H}_o)=4.2040\times 10^{-5}$ and  $\sigma_{L}({\bf H}_o)=  2.7791$.
 Let  ${\bf x}_o$  be the blockwise polynomial consisting of four strips and imposes  $(0.5-2 i_x)$ on the first and third diagonal strips and $(0.5 + i_x^2 + i_y^2)$ on the second and fourth strips respectively \cite{jiang19, Emirov20}.
In the  simulation, the signals
\vspace{-.6em}\begin{equation*}{\bf x}={\bf x}_o+\pmb\eta
\vspace{-.4em}\end{equation*}
 are obtained by  a blockwise polynomial ${\bf x}_o$ corrupted by  noises $\pmb \eta$ with their components being i.i.d. random variables with uniform distribution on $[-\eta, \eta]$,
and  the observations ${\bf y}$ of the filtering procedure are given by ${\bf y}={\bf H}{\bf x}$,
 see the left and middle images of
Figure \ref{comp_alg}.
In the  simulation, we use  the SPGDA  \eqref{Sym_neumann.Alg} and the PGDA \eqref{gradientdescent.al2.default}
with zero initial
		to implement  the inverse filtering procedure ${\bf y}\mapsto {\bf H}^{-1}{\bf y}$, %={\bf x}$,
%={\bf H}_o^{-1}{\bf y}_o$
%on a random geometric graph ${\mathcal G}_{512}$,
and also we compare their performances with the %standard
gradient decent algorithm   % \eqref{GradientDescent1},
\vspace{-.3em} \begin{equation}\label{gd.algorithm}
{\bf x}^{(m)}= ({\bf I} -\beta_{op} {\bf H}^T{\bf H}){\bf x}^{(m-1)}+\beta_{op}{\bf H}^T {\bf y}, \ m\ge 1
\vspace{-.3em}\end{equation}
 with
zero initial  and
% step length $\beta=\|{\bf H}\|_{\mathcal{S}}^{-2}$ and
  optimal step length $\beta_{op}$ selected in \cite{ sihengTV15,  Emirov20, Shi15},
  % = 2/(\sigma_S^2({\bf H})+\sigma_L^2({\bf H}))$,
 OpGD in abbreviation,
% and its symmetric version,
%\vspace{-.3em} \begin{equation}\label{symmetricgd.algorithm}
%{\bf x}^{(m)}= ({\bf I} -\beta_{sop} {\bf H}_o){\bf x}^{(m-1)}+\beta_{sop} {\bf y}_o, \ m\ge 1
%\vspace{-.3em}\end{equation}
%with zero initial and  optimal step length $ \beta_{sop}=2/(\sigma_S({\bf H}_o)+\sigma_L({\bf H}_o))$, %=0.7876$,
% OpSGD in abbreviation,
and   the  iterative matrix
inverse approximation algorithm,
\vspace{-.3em} \begin{equation}\label{gd.algorithm}
{\bf x}^{(m)}= ({\bf I} - \tilde {\bf D} {\bf H}){\bf x}^{(m-1)}+
 \tilde {\bf D}  {\bf y}, \ m\ge 1
\vspace{-.3em}\end{equation}
 IMIA in abbreviation, where ${\bf x}^{(0)}={\bf 0}$ and the diagonal matrix $\tilde {\bf D}$ has entries
   $H(i,i)/(\sum_{\rho(j,i)\le 2} |H(i,j)|^2), i\in V$, see \cite[Eq. (3.4)]{Tay19} with $\tilde \sigma=0$.
Shown in   Figure \ref{comp_alg} is
the average of the relative inverse filtering error  $E_2(m), 1\le m\le 200$
%$$E_2(m)= {\|{\bf x}^{(m)}-{\bf x}\|_2}/{\|{\bf x}\|_2}, \  m\ge 1,$$
%Define the relative inverse filtering error at $m$-th iteration by
	%in each iteration via the iteration method
% \vspace{-.3em}\begin{equation*} E_2(m)= {\|{\bf x}^{(m)}-{\bf x}_o\|_2}/{\|{\bf x}_o\|_2}, \  m\ge 1
%	 \vspace{-.3em}\end{equation*}
over 1000 trials, %with graphs ${\mathcal G}_{512}$, filters  ${\bf H}$ and signals ${\bf x}$ randomly selected,
 and it reaches the relative error 5\% at  about  $57$th iteration for IMIA,   118th iteration for
 SPGDA, and more than 3000 iterations for PGDA   and OpGD.
 This  confirms  % the conclusions  %in Theorems \ref{exponentialconvergence.tm} and \ref{symmetricexponentialconvergence.tm}
 that ${\bf x}^{(m)}, m\ge 1$, in  the
  SPGDA, PGDA,  OpGD and IMIA converge exponentially to the output  ${\bf x}$ of the inverse filtering,
%  ${\bf x}^{(m)}, m\ge 0$, in SPGDA, PGDA   %or OpSGD
%  OpGD,  and IMIA converge to the output ${\bf x}$ of the inverse filtering,
%%the relative inverse filtering error
%% $E_2(m), m\ge 1$, converges  exponentially to zero,
 and  the  convergence rate
 % for the SPGDA, PGDA, %OpSGD,
%OpGD IMIA0 and IMIA2 are related to
%  The smallest eigenvalues for the matrices  ${\bf P}_{\bf H}^{-1/2}{\bf H}{\bf P}_{\bf H}^{-1/2}$, ${\bf P}_{\bf H}^{-1}{\bf H}^{T}{\bf H}{\bf P}_{\bf H}^{-1}$, $\|{\bf H}\|_S^{-2}{\bf H}^{T}{\bf H}$   are  $1.6187*10^{-4},  2.6196*10^{-8}, 7.4842*10^{-17}$
 are   spectral radii
 of  matrices  ${\bf I}-({\bf P}_{{\bf H}}^{\rm sym})^{-1} {\bf H}$, ${\bf I}-{\bf P}_{{\bf H}}^{-2}{\bf H}^{T}{\bf H}$,  %$\beta_{sop}{\bf H}_o$,
  ${\bf I}-\beta_{op}{\bf H}^{T}{\bf H}$ and ${\bf I}-\tilde {\bf D}{\bf H}$, see Theorems \ref{exponentialconvergence.tm} and \ref{symmetricexponentialconvergence.tm}. Here
   the average of spectral radii in SPGDA, PGDA,  OpGD and IMIA are $0.9786, 0.9996,  0.9993,
 0.9566$   respectively. We remark that the reason for PGDA and OpGD to have slow convergence in the above simulation could be that  their spectral radii are too close to $1$.
For the filter perturbation level $\gamma=0.1$, our simulations indicate that for some filters ${\bf H}$ being invertible but not positive definite, the corresponding PGDA and OpGD converge
%(and hence they can be still used for the implementation of the corresponding inverse filtering procedure)
while  SPGDA and IMIA diverge.
%may not converges

\smallskip

\begin{figure}[t] %[h] %[h]
\begin{center}
\includegraphics[width=45mm, height=25mm]{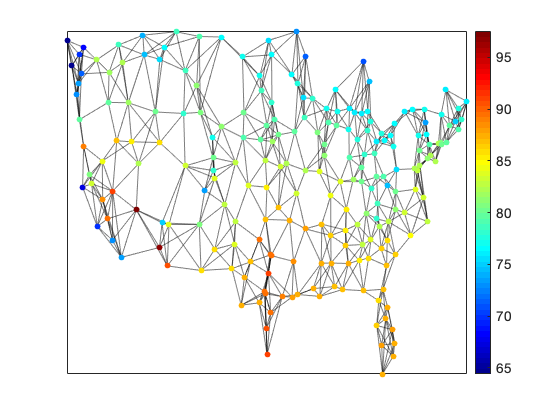}
\includegraphics[width=33mm, height=25mm]{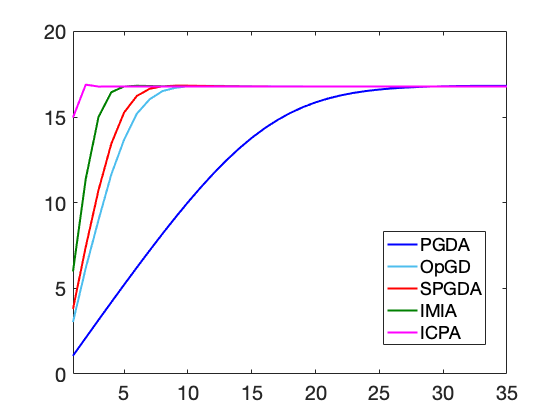}
\caption{\small Plotted on the left is  the original temperature data ${\bf x}_{12}$.
% of 218 locations in the United States at  12:00 PM on August 1st, 2010.
 Shown on the right is average of the signal-to-noise ratio  $\text{SNR}(m)=-20\log_{10}{\|{\bf x}^{(m)}-{\bf x}_{12}\|_2}/{\|{\bf x}_{12}\|_2}, 1\le m\le 35$,
over 1000 trials, where ${\bf x}^{(m)}$, $m\ge 1$, are the outputs of PGDA,  SPGDA, OpGD, IMIA and ICPA, and average of the limit  SNR
%of $\text{SNR}(m)$
is 16.7869.
 % and IMIA2.
  % $E_2(m), 1\le m\le 100$,  %and  $E_\infty(m, {\bf x})$
% for the SPGDA, PGDA, OpGD  to implement the inverse filtering ${\bf y}_o\mapsto {\bf H}_o^{-1}{\bf y}_o$.
}
\label{US_denoising12}
\end{center}
\vspace{-.25in} %\vspace{-1em}
\end{figure}
Let ${\mathcal G}_T=(V_T,E_T)$ be the undirected graph with 218 locations in  the  United  States as vertices and edges constructed by the 5 nearest neighboring locations,
and let  ${\bf x}_{12}$ be the recorded temperature vector of those 218 locations on August 1st, 2010 at 12:00 PM, see  Figure \ref{US_denoising12}
\cite{Emirov20, zeng17}.
In the  second simulation, we consider to implement the inverse filtering procedure
\vspace{-.4em}
\begin{equation*}    \tilde{\bf x}=({\bf I}+\alpha {\bf L}_{\mathcal{G}_T}^{\rm sym})^{-1}{\bf b}
\vspace{-.3em}\end{equation*}
arisen from the minimization problem
\vspace{-.3em}
\begin{equation*}   \tilde{\bf x}:=\arg\min_{{\bf z}}\|{\bf z}-{\bf b}\|_{2}^2+\alpha {\bf z}^T{\bf L}_{\mathcal{G}_T}^{\rm sym}{\bf z}
\vspace{-.3em}\end{equation*}
 in denoising the hourly temperature data ${\bf x}_{12}$,
where ${\bf L}_{\mathcal{G}_T}^{\rm sym}$ is the normalized Laplacian on ${\mathcal G}_T$,
 $\alpha$ is a penalty constraint  and ${\bf b}={\bf x}_{12}+\pmb \eta$ is the temperature vector corrupted by i.i.d. random noise $\pmb \eta$ with its components being
 randomly  selected  in $[-\eta,\eta]$ in a uniform distribution \cite{Emirov20, zeng17}.
Shown in  Figure \ref{US_denoising12} is the performance of the SPGDA, PGDA, OpGD, IMIA and ICPA to implement the above
inverse filtering procedure with noise level $\eta=35$ and the penalty constraint $\alpha=0.9075$ \cite{Emirov20},
where ICPA is the iterative Chebyshev polynomial approximation algorithm of order one   \cite{Shuman18, Emirov19, Emirov20}.
 This indicates that the 3rd term in ICPA, the 5th term in IMIA, the 8th term of SPGDA, the 10th term of OpGD and the 30th term of PGDA
 can be used as the denoised  temperature vector $\tilde {\bf x}$. %, where the average SNR is 16.7869.

To implement the inverse filter procedure \eqref{inversefilterprocedure.def} on SDNs,  we observe from the above two simulations that
OpGD  outperforms  PGDA while the selection of optimal step length in OpGD
is computationally expensive. If the filter is positive definite,   SPGDA, IMIA and ICPA may have better performance %and lower computation and communication cost
than OpGD and PGDA have.
On the other hand, SPGDA always converges, % by Theorem  \eqref{symmetricexponentialconvergence.tm}
but  the requirement in \cite[Theorem 3.2]{Tay19}
to guarantee the convergence of IMIA may not be  satisfied and ICPA is applicable  for polynomial filters.
 %and  not self-adaptive for time-varying filters.

\newpage


\begin{thebibliography}{99}



     		\bibitem{Yick08} J. Yick, B. Mukherjee, and D. Ghosal, ``Wireless sensor network survey,"
{\em Comput. Netw.}, %Computer Networks},
vol. 52, no. 12, pp. 2292-2330, Aug. 2008.


 \bibitem{shuman13} D. I. Shuman,  S. K. Narang, P. Frossard, A. Ortega, and P.  Vandergheynst, ``The emerging field of signal processing on graphs: extending high-dimensional data analysis to networks and other irregular domains," {\em IEEE Signal Process. Mag.}, vol. 30, no. 3, pp. 83-98, May 2013.
\bibitem{Hebner17} R. Hebner, ``The power grid in 2030," {\em IEEE Spectrum}, vol. 54, no. 4, pp. 50-55, Apr. 2017.

\bibitem{Cheng17} C. Cheng, Y. Jiang, and Q. Sun, ``Spatially distributed sampling and reconstruction,"
{\em Appl. Comput. Harmon. Anal.},
 vol. 47, no. 1, pp. 109-148,  July 2019.

 %%%%%%%%%%%%%%%%%%%%%%%%


     		\bibitem{aliaksei13} A. Sandryhaila and J. M. F. Moura, ``Discrete signal processing on graphs,"  {\em IEEE Trans. Signal Process.}, vol. 61, no. 7, pp. 1644-1656, Apr. 2013.


     		\bibitem{aliaksei14} A. Sandryhaila and J. M. F. Moura, ``Big data analysis with signal processing on graphs: representation and processing of massive data sets with irregular structure," {\em IEEE Signal Process. Mag.}, vol. 31, no. 5, pp. 80-90, Sept. 2014.
	
     			\bibitem{Ortega18} A. Ortega, P. Frossard, J. Kova{\v{c}}evi{\'c}, J. M. F. Moura, and P. Vandergheynst, ``Graph signal processing: overview, challenges, and applications," {\em Proc. IEEE }, vol. 106, no. 5, pp. 808-828, May 2018.


     		\bibitem{mario19} M. Coutino, E. Isufi, and G. Leus, ``Advances in distributed graph filtering," {\em IEEE Trans. Signal Process.}, vol. 67, no. 9, pp. 2320-2333, May 2019.

      \bibitem{jiang2021}
J. Jiang, D. B. Tay, Q. Sun and S. Ouyang, ``Recovery of time-varying graph signals via distributed algorithms on regularized problems'', {\em
IEEE Trans. Signal Inf. Process. Netw.}, to appear, 2020.

 %%%%%%%%%%%%%%%%%%%%%

            \bibitem{Tay19} J. Jiang, and  D. B. Tay, ``Decentralised signal processing on graphs via matrix inverse approximation,'' {\em Signal Process.}, vol. 165, pp.292-302, Dec. 2019.

                 				\bibitem{jiang19} J. Jiang, C. Cheng, and Q. Sun, ``Nonsubsampled graph filter banks: theory and distributed algorithms," {\em IEEE Trans. Signal Process.}, vol. 67, no. 15, pp. 3938-3953, Aug. 2019.



 		\bibitem{sihengTV15} S.  Chen,	A. Sandryhaila,   J. M. F. Moura, and J. Kova{\v{c}}evi{\'c}, ``Signal recovery on graphs: variation minimization," {\em IEEE Trans. Signal Process.}, vol. 63, no. 17, pp. 4609-4624, Sept.  2015.

 		\bibitem{Leus17}  E. Isufi, A. Loukas, A. Simonetto, and G. Leus, ``Autoregressive moving average graph filtering," {\em IEEE Trans. Signal Process.}, vol. 65, no. 2, pp. 274-288, Jan. 2017.

 			\bibitem{isufi19} E. Isufi, A. Loukas, N. Perraudin, and G. Leus, ``Forecasting time series with VARMA recursions on graphs," {\em IEEE Trans. Signal Process.}, vol. 67, no. 18, pp. 4870-4885, Sept. 2019.

 		\bibitem{Tay18} W. Waheed and D. B. H. Tay, ``Graph polynomial filter for signal denoising," {\em IET Signal Process.}, vol. 12, no. 3, pp. 301-309, Apr.  2018.

 		\bibitem{Shuman18} D. I. Shuman, P. Vandergheynst, D. Kressner, and P. Frossard, ``Distributed signal processing via Chebyshev polynomial approximation,'' {\em IEEE Trans. Signal Inf. Process. Netw.}, vol. 4, no. 4, pp. 736-751, Dec. 2018.
	
         \bibitem{Emirov19} C. Cheng, J. Jiang, N. Emirov, and Q. Sun, ``Iterative Chebyshev polynomial algorithm for signal denoising on graphs'', in {\em Proceeding 13th Int.
Conf. on SampTA}, Bordeaux, France, July 2019, pp. 1-5.




 \bibitem{David2019} J. Jiang, D. B. Tay, Q. Sun, and S. Ouyang, ``Design of nonsubsampled graph filter banks via lifting schemes," {\em IEEE Signal Process. Lett.}, vol. 27,   pp. 441-445, Feb. 2020.
    % ( Volume: 27 ) Page(s): 441 - 445 accepted, 2020, DOI: 10.1109/LSP.2020.2976550


	
		\bibitem{Emirov20} N. Emirov, C. Cheng, J. Jiang,  and Q. Sun, ``Polynomial graph filter of multiple shifts and distributed implementation of inverse filtering," arXiv: 2003.11152, Mar. 2020.



%%%%%%%%%%%%%%%%%%%%%%%%%%%%%%%%%

       \bibitem{Shi15} X. Shi, H. Feng, M. Zhai, T. Yang, and B. Hu,  ``Infinite impulse response graph filters in wireless sensor networks," {\em IEEE Signal Process. Lett.}, vol. 22, pp. 1113-1117, Aug. 2015.




               %%%%%%%%%%%%%%%%%%%%%%%%%

               \bibitem{nadler2006} B.    Nadler, S.  Lafon, I.  Kevrekidis, and  R.  Coifman,
 ``Diffusion maps, spectral clustering and eigenfunctions of Fokker-Planck operators," In {\em  Advances in Neural Information Processing Systems 18},
 Y. Weiss, B. Sch\"olkopf,  and J.  Platt eds,     MIT Press, Cambridge, 2006, pp. 955-962.

  %In Advances in Neural Information Processing Systems 18 - Proceedings of the 2005 Conference (pp. 955-962)


  \bibitem{Nathanael2014} P. Nathanael, J. Paratte, D. Shuman, L. Martin, V. Kalofolias, P. Vandergheynst, and D. K. Hammond, ``GSPBOX: A toolbox for signal processing on graphs,''  {\em arXiv}:1408.5781, Aug. 2014.

           \bibitem{zeng17}J. Zeng, G. Cheung, and A. Ortega, ``Bipartite approximation for graph wavelet signal decomposition," {\em IEEE Trans. Signal Process.}, vol. 65, no. 20, pp. 5466-5480, Oct. 2017.
			

      %%%%%%%%%%%%%%%%%%%%%%%%%%%%



\end{thebibliography}
\end{document}